\def\BibTeX{{\rm B\kern-.05em{\sc i\kern-.025em b}\kern-.08em
		T\kern-.1667em\lower.7ex\hbox{E}\kern-.125emX}}
\theoremstyle{definition}
\DeclareMathOperator*{\argmax}{argmax}
\newcommand{\f}[1]{\texttt{\MakeUppercase{#1F}}}
\newcommand{\coins}[1]{\mathbf{c}(#1)}
\newcommand{\fee}[1]{\mathbf{f}(#1)}
\newcommand{\ebc}[1]{\mathbf{e}(#1)}
\newcommand{\bc}[1]{\mathbf{bc}(#1)}
\newcommand{\sender}[1]{\mathbf{s}(#1)}
\newcommand{\receiver}[1]{\mathbf{r}(#1)}
\title{How to profit from payments channels\thanks{This work with minor changes is accepted for publication in Financial Cryptography and Data Security (FC) 2020.}}
\author{O\u{g}uzhan Ersoy, Stefanie Roos and Zekeriya Erkin}
\authorrunning{O\u{g}uzhan Ersoy et al.}
\institute{Delft University of Technology\\
	\email{\{o.ersoy, s.roos, z.erkin\}@tudelft.nl} }
\begin{document}

\maketitle 

\begin{abstract}
Payment channel networks like Bitcoin's Lightning network are an
auspicious approach for realizing high transaction throughput and
almost-instant confirmations in blockchain networks. However, the
ability to successfully make payments in such networks relies on the
willingness of participants to lock collateral in the network. In
Lightning, the key financial incentive is to lock collateral are small
fees for routing payments for other participants. While users can choose these fees, currently, they mainly stick to the default fees.
By providing insights on
beneficial choices for fees, we aim to incentivize users to lock more
collateral and improve the effectiveness of the network.

In this paper, we consider a node $\mathbf{A}$ that given the network topology and the channel details selects where to establish channels and how much fee to charge such that its
financial gain is maximized. We formalize the optimization problem and show that it is NP-hard.
We design a greedy algorithm to approximate the optimal solution.
In each step, our greedy algorithm selects a node which maximizes the total reward concerning the number of shortest paths passing through $\mathbf{A}$ and channel fees. 
Our simulation study leverages real-world data set to quantify the impact of our gain optimization and indicates that our strategy is at least a factor two better than other strategies.   
\end{abstract}

\section{Introduction}\label{sec:intro}

%%motivate payment channels 
Payment channel networks~\cite{gudgeon2019sok} overcome the need to globally agree on every transaction in a blockchain. Instead, nodes can open and close \emph{channels} that they can use to transfer coins directly. In the absence of disputes, transactions only require local communication between the parties involved in a transaction. Nodes without a direct payment channel can route payments via intermediaries to avoid the transaction fees and delays for establishing a channel. 
By moving transactions off-chain, payment channels have the potential to drastically increase the transaction throughput while reducing the confirmation times from tens of minutes to sub-seconds. The most notable examples of payment channel networks are Bitcoin's Lightning~\cite{lightning} and Ethereum's Raiden~\cite{raiden}.

%and the need to lock collateral; 
When opening a payment channel, nodes need to lock coins that they cannot use outside of the channel during the lifetime of the channel. This opportunity cost makes it unattractive to maintain payment channels. However, routing payments in a network require that the network has well-funded channels~\cite{gudgeon2019sok}. The key incentives for locking collateral in a channel are i) frequent transaction with the other party~\cite{decker2015fast} and ii) financial gain through routing fees~\cite{DBLP:conf/middleware/EngelmannKKGW17}, i.e., fees that nodes charge for routing payments as intermediaries.  Our analysis on the Lightning network shows that the fees charged for routing are currently low and mainly equal to the default value~\cite{elementsproject}, we conjecture that the current payment channel networks primarily rely on the first incentive. However, research on the Lightning network suggests that this incentive entails networks of a low resilience with a few central hubs~\cite{rohrer2019discharged}. Analyzing the second incentives and show-casing that payments channels can entail financial profit is the most promising avenue of research to incentivize the participation in payment channel networks and fully leverage the potential of this promising blockchain scalability approach. 

%research question: 
In this paper, we adapt a payment channel network model based on Lightning. We assume a known topology and fees. Nodes select the cheapest path to conduct a payment. A node $\mathbf{A}$ aims to maximize its profit through routing fees by choosing both its payment channels and fees. The problem is challenging as higher fees indicate a higher profit if the node routes the payment but also a lower probability to be chosen for routing due to the transactions taking the cheapest path.  

%%some related work
Despite the importance of fees in payment channel networks, the issue has been mainly ignored in past research. The majority of papers deal with cryptographic protocols for channel establishment and multi-hop payments (e.g.,~\cite{spilmanbitcoinj,decker2015fast,egger2018amcu,decker2018eltoo,sprites}) as well as algorithms for routing payments (e.g., ~\cite{Flare2016,roos2017settling,hoenisch2018aodv}). 
There is some work on comparing routing fees to the on-chain fees of blockchains and presenting an economical analysis of the relation between the two fee types
~\cite{DBLP:conf/middleware/EngelmannKKGW17,DBLP:journals/corr/abs-1712-10222}. It is interesting to note that routing fees are related to the payment value whereas on-chain blockchain fees usually relate to the size of the transactions. 
In contrast, Di Stasi et al.~\cite{DBLP:conf/ithings/StasiACV18} evaluated the impact of routing fees on keeping channels balanced, i.e., ensuring that a channel is not used exclusively in one direction. The authors suggest a novel linear fee policy for each channel to improve channel balances. 
Most similar to our work, Avarikioti et al.~\cite{DBLP:conf/esorics/AvarikiotiJWW18} studied the optimal fee assignment of channels from the point of view of a payment service provider (PSP).
The authors analyzed optimal channel fees of the whole network that maximizes the total reward of the PSP instead of focusing on a node, which defines our problem.
In this manner, the authors presented a linear program solving the optimal fee assignment when the network is tree-structured, which is a very limiting assumption. 

%In \cite{DBLP:journals/corr/abs-1712-10222}, Br\^{a}nzei et al. presented an economical analysis of the channel fees in minors perspective. 
%More specifically, the authors investigated the relationship between Bitcoin transaction fees and the Lightning channel fees and computed Bitcoin fee equilibrium with and without Lightning network.
%In these analyses, the authors used special network topologies like star or uniform to model the Lightning network.

%In \cite{DBLP:conf/esorics/AvarikiotiJWW18}, Avarikioti et al. studied the optimal fee assignment of channels from the point of view of a payment service provider (PSP).
%PSP decides on the graph structure as well as the fees of channels in that graph to maximize its profit.
%As a result, the authors presented linear program solving optimal fee assignment when the graph is tree-structured.

%In \cite{subramanianrebalancing} Rebalancing in Acyclic Payment Networks*

%In \cite{DBLP:conf/ithings/StasiACV18}, Di Stasi et al. analysed the channel fee regarding the channel balances.
%The authors provide a linear fee policy for each channel, which targets to balance the channel.

%In \cite{DBLP:conf/middleware/EngelmannKKGW17}, Engelmann et al. modelled economic incentives of usage of the payment channels.
%They showed that for some cases using the Bitcoin network is cheaper than Lightning network.

%%our contribution 
We are hence the first to cover the aspect of maximizing fees in payment channel networks. More precisely, we formalize the problem of maximizing fees in a Lightning-inspired system model. We present an algorithm for solving the defined optimization problem heuristically. Our greedy algorithm iteratively i) adds channels and ii) selects fees such that each added channel increases the profit maximally for the previously selected channels. For this purpose, we leverage the concept of vertex and edge betweenness centrality, i.e., the fraction of cheapest paths a vertex or edge is contained in. 
We evaluate our algorithm for real-world data set based on the Lightning network. Our evaluation strongly indicates that our approach does not only greatly improve the profit in comparison to default fees but also that leveraging betweenness centrality for selecting channels offers considerably better results than other network centrality measures. 
More precisely, our algorithm increases the profit by a factor 4 in comparison to default fee values and is at least a factor 2 better than other strategies based on network centrality metrics. 
An important result is that the nodes with already established channels can still gain an advantage by using our fee selection algorithm to replace their default fee values.

\section{Background}\label{sec:prelim}

This section summarizes key concepts from the field of payment channels required to understand the remainder of the paper. Furthermore, as our algorithm relies on graph centrality metrics, this section defines these metrics and gives some intuition on their role.  

\subsection{Payment Channel Networks}
Payment channel networks (PCN) are one of the two key approaches to scaling blockchains by moving transactions off-chain~\cite{gudgeon2019sok}. 
Two parties open a payment channel through an initial funding transaction on the blockchain that locks coins such that they can only be used for transactions between the two parties. After this initial funding transaction, the two parties can conduct payments without directly interacting with the blockchain. They commit to the latest balance of the channel, i.e., the distribution of the total number of locked coins over the two parties. 
For instance, let nodes $u$ and $v$ open a payment channel such that $u$ locks $x$  coins and $v$ locks $y$ coins. The initial \emph{balance} of the channel is $(x,y)$ and its total \emph{capacity} is $x+y$. If $u$ sends one coin to $v$, the balance changes to $(x-1,y+1)$. 

In case of a dispute about the channel balance, the signed commitments documenting the state changes are published on the blockchain. The blockchain consensus then assigns the coins according to the latest valid channel state. Once the two parties decide to close their channel, they have to conduct a closing transaction on the blockchain. Afterward, they receive the coins locked in the channel with the number of coins per party corresponds to the channel balance at the time of the closure. 
In the absence of disputes, the intermediary transactions are almost instant and the number of transaction is merely bound locally by the bandwidth and latency of nodes. 

Establishing a payment channel does not make sense if parties do not trade with each other regularly due to i) the on-chain fees for establishing the channel and ii) the opportunity cost caused by locking coins to the channel. Thus, most nodes will only establish a few channels with frequent trading partners. Routing payments via a path consisting of multiple channels nevertheless allows nodes to trade without having a direct channel. For instance, a node $s$ can make a payment to a node $r$ via two intermediary nodes $u$ and $v$, meaning that the payment is routed via three payment channels: $s$ to $u$, $u$ to $v$, and $v$ to $r$. The balances along all these channels change according to the transaction value.  

The intermediary nodes charge fees for the use of their channels. For a channel $Ch_i$ from $u$ to $v$, these fees consist of a basic fee $BF_{Ch_i}$ for using the channel and fee rate $FR_{Ch_i}$ per transferred unit. The overall fee of a transaction $tx$ for the channel is hence
\begin{eqnarray}\label{eqn:lightningfee}
\fee{Ch_i,tx}= BF_{Ch_i} +FR_{Ch_i} \cdot |tx|,
\end{eqnarray}
where $|tx|$ denotes the transaction amount.
The fees are determined by and paid to $u$. The sender $s$ has to pay the fees. 
Note that the fee calculation formula given in Equation \ref{eqn:lightningfee} is specific for the Lightning network \cite{bolt}.
Still, the other payment or state channel networks have a similar structure.

%\todo{where do we have the info about the base fee + rate?, is that Lightning-specific?}

\subsection{Graph Centrality Metrics}
In this work, we model a PCN network as a directed graph.
In this manner, each node in the payment channel represents and vertex in the graph and each channel is represented by two directional edges between the nodes (one for each direction).
The channel fees correspond to the weights of the edges.

As a consequence, we can make use of graph metrics that characterize the importance of certain nodes in a weighted directed graph. Our key metrics are (vertex) betweenness centrality and edge betweenness centrality.  

\begin{definition}[Betweenness Centrality] 
	The betweenness centrality of a vertex~\cite{Freeman1977} is proportional the total number of shortest paths that pass through that vertex, i.e., 	
	\begin{eqnarray*}
		\bc{v} = \sum_{\substack{s\neq t \neq v \\ \sigma_{st}\neq 0 }} \frac{\sigma_{stv}}{\sigma_{st} },
	\end{eqnarray*}
	where $\sigma_{st}$ denotes the number of shortest paths between $s$ and $t$ and  $\sigma_{stv}$ is the number of such shortest paths containing the vertex $v$. 

Similarly, the edge betweenness centrality~ \cite{Girvan7821} of an edge relates to the total number of shortest paths that pass through that edge, i.e., 	
\begin{eqnarray*}
	\ebc{[v_1v_2]} = \sum_{\substack{s\neq t \\ \sigma_{st}\neq 0 }} \frac{\sigma_{st[v_1v_2]}}{\sigma_{st} },
\end{eqnarray*}
where $\sigma_{st[v_1v_2]}$ is the number of shortest paths passing through the edge $[v_1v_2]$.
\end{definition}

We make use of the following result about vertex betweenness centrality to assess the suitability of our greedy heuristic for selecting channel fees. 

\begin{theorem}[\cite{DBLP:journals/jea/BergaminiCDMSV18}]\label{thm:bc}
	For each vertex $v$, betweenness centrality function $\bc{v}$ is a monotone function for the set of edges incident to $n$.
\end{theorem}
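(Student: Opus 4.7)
\medskip

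\noindent\textbf{Proof plan.} The plan is to show that adding a single edge $e$ incident to $v$ cannot decrease $\bc{v}$, and then invoke this for each edge added in sequence to handle arbitrary supersets. Let $G$ denote the graph before insertion of $e$ and $G'$ the graph afterwards, and let $\sigma_{st}^G, \sigma_{stv}^G$ and $\sigma_{st}^{G'}, \sigma_{stv}^{G'}$ be the corresponding shortest-path counts. The key structural observation is that any new path that appears in $G'$ but not in $G$ must traverse $e$, and since $e$ is incident to $v$, every such path passes through $v$.

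The proof proceeds by a pair-wise case analysis over sources $s\neq t$ with $s,t\neq v$, comparing the contribution $\sigma_{stv}/\sigma_{st}$ in $G$ and $G'$. Let $d^G(s,t)$ and $d^{G'}(s,t)$ be the respective shortest-path weights; clearly $d^{G'}(s,t)\le d^G(s,t)$ since adding an edge can only shorten distances. First, if $d^{G'}(s,t)<d^G(s,t)$, then every shortest $s$--$t$ path in $G'$ uses $e$ and therefore passes through $v$, so $\sigma_{stv}^{G'}/\sigma_{st}^{G'}=1$, which is at least the old contribution. Second, if $d^{G'}(s,t)=d^G(s,t)$, then every old shortest path survives in $G'$, and any additional shortest path in $G'$ must use $e$ and hence pass through $v$. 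Writing $\sigma_{st}^{G'}=\sigma_{st}^G+\Delta$ and $\sigma_{stv}^{G'}=\sigma_{stv}^G+\Delta$ for some $\Delta\ge 0$, the inequality
\begin{equation*}
\frac{\sigma_{stv}^G+\Delta}{\sigma_{st}^G+\Delta}\;\ge\;\frac{\sigma_{stv}^G}{\sigma_{st}^G}
\end{equation*}
follows from $\sigma_{stv}^G\le\sigma_{st}^G$ by cross-multiplication. Summing these pairwise inequalities over all admissible $(s,t)$ gives $\bc{v}^{G'}\ge\bc{v}^{G}$.

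Finally, to promote single-edge monotonicity to monotonicity with respect to the entire set of edges incident to $v$, I would add the edges one at a time, observing that at each intermediate step the added edge is still incident to $v$, so the single-edge argument applies and the betweenness centrality can only grow.

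The main obstacle I anticipate is the second case: one has to argue carefully that \emph{every} new shortest path of the unchanged length must use $e$, which relies on the fact that the set of paths of a given weight in $G$ is exactly the set of such paths in $G'$ that avoid $e$. A second subtlety is handling zero-weight or parallel edges cleanly, and ensuring that ``incident to $v$'' is interpreted consistently with the directed model of Section~2 (so that the incident edge indeed forces any traversing path to visit $v$). Once these points are nailed down, the monotonicity conclusion is immediate.
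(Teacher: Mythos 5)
The paper does not prove this statement; it imports it from Bergamini et al.~\cite{DBLP:journals/jea/BergaminiCDMSV18}, so there is no in-paper argument to compare yours against. Your proof is correct and is essentially the standard argument for this result: reduce to a single edge insertion, observe that every path present in $G'$ but not in $G$ must use the new edge $e$ and hence visits $v$ as an intermediate vertex (the sum ranges over $s\neq t\neq v$, so $v$ cannot be an endpoint), split on whether $d(s,t)$ strictly decreases (new ratio equals $1$, which dominates any old ratio) or stays the same (mediant inequality $\frac{\sigma_{stv}+\Delta}{\sigma_{st}+\Delta}\geq\frac{\sigma_{stv}}{\sigma_{st}}$, valid since $\sigma_{stv}\leq\sigma_{st}$ and $\Delta\geq 0$), and then iterate edge by edge over the set. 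Two points worth making explicit: (i) pairs with $\sigma_{st}=0$ in $G$ that become connected in $G'$ enter the sum with contribution $1\geq 0$, which your first case covers once you allow $d^G(s,t)=\infty$; (ii) the subtleties you flag about zero-weight edges do not arise in this paper's model, since channel fees (the edge weights) are bounded below by $1$, so shortest paths are simple and all path counts are finite; and in the directed setting an edge incident to $v$, whether incoming or outgoing, still forces any traversing path to visit $v$. With those remarks the argument is complete and matches the proof in the cited source.
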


An important problem concerning the betweenness centrality is the \textit{maximum betweenness improvement} ($\mathsf{MBI}$) problem.
\begin{definition}[MBI problem \cite{DBLP:journals/jea/BergaminiCDMSV18}] \label{def:mbi}
	For a given directed graph $G$ and a node $n$, finding a limited number of channels incident to node $n$ such that $\bc{n}$ would be maximized is called as the \emph{maximum betweenness improvement} problem.
\end{definition}
With the help of the following theorem on $\mathsf{MBI}$ problem, we prove that our problem of maximizing the reward ($\mathsf{MRI}$) is also NP-hard.
\begin{theorem}[\cite{DBLP:journals/jea/BergaminiCDMSV18}]\label{thm:MBIapprox}
	MBI problem cannot be approximated in polynomial time within a factor greater than $1-\frac{1}{2\epsilon}$,unless $P=NP$.
\end{theorem}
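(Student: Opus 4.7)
The plan is to prove this inapproximability by a gap-preserving reduction from the \textsc{Maximum Coverage} problem, which, by the classical result of Feige, admits no polynomial-time approximation within a factor better than $1-1/\epsilon$ unless $P=NP$. Given an instance with sets $S_1,\ldots,S_m$ over a universe $U$ and budget $k$, I would build a directed graph $G$ together with a distinguished vertex $v$, such that selecting $k$ edges incident to $v$ so as to maximize $\bc{v}$ is equivalent, up to a controlled additive shift and a factor-$2$ shrinkage of the objective, to picking $k$ sets maximizing coverage of $U$.

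The gadget I have in mind has two layers. First, for every set $S_j$ introduce a \emph{set-vertex} $a_j$; these $m$ vertices are the candidate endpoints of the $k$ new edges out of $v$. Second, for every element $u_i\in U$ introduce an \emph{element gadget} consisting of a source $x_i$ and sink $y_i$ together with two types of internal $x_i$-to-$y_i$ routes: a ``through-$v$'' route of length $\ell$ that passes $v$ and a specific $a_j$ with $u_i\in S_j$, and an ``external'' route of length at least $\ell$ that avoids $v$ entirely. By tuning edge weights so that the through-$v$ route becomes a shortest path exactly when $v$ is linked to some $a_j$ covering $u_i$, each covered element contributes a fixed number of shortest $(x_i,y_i)$-paths to $\bc{v}$, while each uncovered element contributes none. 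Summing over $i$, $\bc{v}$ tracks the coverage count up to a constant baseline.

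The main obstacle, and the source of the factor $2$ in the denominator of $1-1/(2\epsilon)$, is that one cannot eliminate the competing external shortest paths while keeping the reduction polynomial and the shortest-path structure well-defined for uncovered elements; consequently, each covered element contributes only roughly half of the shortest $(x_i,y_i)$-paths to $v$. Propagating this halving through the reduction is where the hardness factor deteriorates from Feige's $1-1/\epsilon$ to $1-1/(2\epsilon)$: a hypothetical polynomial-time algorithm approximating MBI within a factor strictly better than $1-1/(2\epsilon)$ would, after subtracting the baseline and rescaling by $2$, yield a \textsc{Maximum Coverage} approximation strictly better than $1-1/\epsilon$, contradicting $P\neq NP$. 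The technical heart of the argument is therefore the exact arithmetic of the baseline term together with verifying that the gadget's shortest-path structure is stable under every admissible choice of $k$ edges out of $v$; once this robustness is established, the inapproximability bound transfers directly.
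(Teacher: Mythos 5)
First, a point of order: the paper does not prove this statement at all --- Theorem~\ref{thm:MBIapprox} is imported verbatim from Bergamini et al.~\cite{DBLP:journals/jea/BergaminiCDMSV18} and used only as a black box in the reduction establishing Theorem~\ref{thm:our}. There is therefore no in-paper proof to compare yours against; the assessment below is against the argument in the cited reference. Your overall plan --- a gap-preserving reduction from Maximum Coverage via Feige's $1-1/e$ hardness, with set-vertices as candidate endpoints of the new edges out of $v$ and element gadgets whose source--sink shortest paths route through $v$ exactly when the element is covered --- is indeed the strategy used there, and the ``$\epsilon$'' in the statement is Euler's number $e$, so your reading is the intended one.

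However, your account of where the factor $2$ comes from is wrong, and it is exactly the step you defer as the ``technical heart.'' A uniform factor-$2$ shrinkage of the objective (each covered element contributing ``roughly half'' of its shortest paths to $\bc{v}$) preserves approximation ratios exactly, so it cannot degrade $1-1/e$ to $1-\frac{1}{2e}$; likewise, if you really could ``subtract the baseline and rescale by 2'' you would recover the full $1-1/e$ factor. The loss must come from an \emph{additive} baseline $B$ in $\bc{v}$ that an MBI algorithm is allowed to count toward its guarantee but that carries no coverage information. If the MBI objective equals $f(S)+B$ with $f$ the coverage value, and the construction only guarantees $B\le f(S^{*})$ for the optimal $S^{*}$, then a $\rho$-approximation yields $f(S)\ge \rho\bigl(f(S^{*})+B\bigr)-B\ge (2\rho-1)\,f(S^{*})$, and $2\rho-1>1-1/e$ forces $\rho>1-\frac{1}{2e}$. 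Without exhibiting a gadget in which the baseline is provably bounded by the optimal coverage contribution (and in which no admissible choice of $k$ edges out of $v$ perturbs the intended shortest-path structure), the sketch does not yield the stated constant; as written it is a plan with the decisive arithmetic both missing and misattributed.
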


\section{Our PCN Model}\label{sec:model}
There are a number of PCNs with Lightning~\cite{lightning}, Raiden~\cite{raiden}, Perun \cite{dziembowski2017perun} and Celer \cite{dong2018celer} being key examples. All of them use slightly different assumptions and properties. We base our system model on Bitcoin's Lightning network.

In the following, we first describe our PCN model $\mathbf{LN}$. In this model, we then define the problem of an individual participant aiming to maximize their gain. 
We summarize the notation used in the paper in Table \ref{tab:not} in Section \ref{sec:notation}. 

\subsection{Network Topology, Fees, and Routing}
Nodes open and close payment channels via. blockchain transaction. For simplicity, we assume that the opening cost $ChCost$ remains constant over time.  

In Lightning, the complete topology of the network is known. Nodes publicly announce on the blockchain that they establish or close a channel. Furthermore, nodes willing to route payments are announcing their channels and fees to the complete network.
Thus, we assume in our model that both the topology and the fees of all nodes are publicly known. For simplicity, we assume that the topology and routing fees of the nodes that do not strategically change them remain fixed over time. Otherwise, our fee selection strategy would require a model to anticipate the expected changes. Current research on payment channel networks does not provide such a model. Our analysis on the Lightning network indicates that fees are indeed usually the default value. As topology changes require on-chain transactions, which are costly in both time and on-chain fees, the topology also should not change considerably.
Moreover, we assume that nodes apply source routing to find one cheapest path from source to destination, as is the case in the current implementation of Lightning. 

\subsection{Problem Definition}

We represent a network $\mathbf{LN}$ as a graph $G=(V,E)$ of $V$ vertices $V$ and edges $E$, we consider $\mathbf{A}$ that wants to maximize its revenue in running a node in a payment channel network. $\mathbf{A}$ opens channels with other nodes in the network, each channel having a total cost of $ChCost$ for opening and closing. We assume that $\mathbf{A}$ can strategically select the nodes it establishes channels with from all nodes in the network. 
After all, these nodes do not need to invest anything into the channel as $\mathbf{A}$ completely funds them and will likely receive additional monetary gains through routing fees. 
We assume that $\mathbf{A}$ has a budget of $\mathbf{c(A)}$ coins to use as collateral for the channels in total. 

Formally, let $C$ be the set of chosen channels. For each channel $Ch_i \in C$, we have the coins allocated to the channel $\coins{Ch_i}$ and the channel fee $\fee{ Ch_i, tx}$ for a transaction value $tx$. 
Wlog, transaction values are integers between 1 and $\mathbf{T}_{max}$ following a distribution $T$.
Let $X_i(tx, S, R)$ be the event that a transaction of value $tx$ going from a node $S$ to a node $R$ passes through the channel $Ch_i$. 
Then the expected fee from that transaction is $\fee{ Ch_i, tx }Pr[X_i(tx, S, R)]$. Last, we require the distribution $M$ that returns a sender-receiver pair.  
$\mathbf{A}$'s objective is to find $C$, $f$, and $\coins$ such that the overall expected gain of one transaction 
\begin{eqnarray}
\label{eq:objective}
 \sum_{\substack{\forall S, R \in V\\ S\neq R\neq \mathbf{A}} } Pr(M=(S,R))   \sum_{j=1 }^{\mathbf{T}_{max}} Pr(T=j) \sum_{Ch_i \in C} \fee{ Ch_i, j } \cdot Pr[X_i(j, S, R)]
\end{eqnarray}
is maximized while adhering to the constraint that $\sum_{Ch_i \in C}\coins{Ch_i} \leq \coins{ \mathbf{A} }$. 
%The second term in Eq.~\ref{eq:objective} is the total cost for opening and closing the channels. 
Equation ~\ref{eq:objective}  computes the expected gain over the involved variables $T$ and $M$.   
If the capacity of the channel $\coins{Ch_i}$ is less than the transaction amount $tx$,  $Pr[X_i(tx, S, R)]=0$.
Similarly, if there does not exist a shortest path from $S$ to $R$ that  passes through $Ch_i$, $Pr[X_i(tx, S, R)]=0$. 
Otherwise, $Pr[X_i(tx, S, R)]$ is equal to the number of shortest paths from $S$ to $R$ passing through $Ch_i$ divided by the total number of shortest paths from $S$ to $R$.

Note that Equation~\ref{eq:objective} ignores the cost of opening $C$ channels, $|C| \cdot ChCost$. The impact of this cost depends on the number of transactions $K$ that occur during the lifetime of a channel. Let $max$ be the maximal value for Equation~\ref{eq:objective}. The overall gain of the node is then the difference: $K\cdot max-|C| \cdot ChCost$.  By increasing the lifetime of the channel arbitrarily, the impact of $|C| \cdot ChCost$ diminishes, which is why we disregard it for Equation~\ref{eq:objective}. Our model furthermore disregards the opportunity cost caused by locking coins due to the absence of suitable models for such a cost.

\section{Our Fee Strategy}\label{sec:strategy}

%Our fee strategy consists of two parts
%\begin{enumerate}
%	\item \textit{Channel selection}: Party $\mathbf{A}$ decides with whom to open a channel and determines the fee for each channel maximizing the total reward.
%	\item \textit{Capacity distribution}: Once the channels are chosen, party $\mathbf{A}$ decides how to distribute $\coins{\mathbf{A}}$ coins amongst them.
%\end{enumerate}

We start by showing that maximizing the objective function given in Equation~\ref{eq:objective} is NP-hard. 
Afterward, we present our greedy algorithm for approximating a solution. As our algorithm contains an equation for choosing channel fees without a closed-form solution, the last part of the section demonstrates a method for solving the equation numerically.  

Our proof and algorithm act on a version of Equation~\ref{eq:objective} for specific distributions $T$ and $M$.
In the absence of real-world data for these distributions, we utilize two straight-forward distributions.  
Concretely, our work considers a fixed transaction value, i.e., the random variable $T$ only takes one value $tx$. 
For the distribution $M$, which characterizes the likelihood of two nodes to trade, assuming that all nodes are equally likely to trade with each other is the most natural choice in the absence of a concrete alternative model. Thus, $M$ is a uniform distribution over all pairs of nodes in the following.

For the design of our algorithm, we furthermore bound the maximal channel fee by $f_{max}$. Assuming a maximal channel fee does not reduce the generality of our approach. As nodes send payments along the path with the lowest fee, any channel fee that entails the channel is not contained in any such path can be disregarded. 

\subsection{NP-hardness of the Problem}
Before presenting the actual proof, we rephrase Equation~\ref{eq:objective} to relate it to the concept of (edge) betweenness centrality. 

Choosing $M$ to be a uniform distribution implies that $Pr(M=(S,R))=\frac{1}{(|V|-1)(|V|-2)}$\footnote{$(|V|-1)(|V|-2)$ is the number of pairs of nodes when not including $\mathbf{A}$} is a constant, which can disregarded for the optimization. Furthermore, choosing a constant transaction value $tx$ removes the second sum in Equation~\ref{eq:objective}.
Hence our modified objective function is 
\begin{eqnarray}
\label{eq:objectiveMod}
 \sum_{Ch_i \in C} \fee{ Ch_i, tx } \cdot Pr[X_i(tx, S, R)].
\end{eqnarray}

The next step relates $Pr[X_i($tx$, S, R)]$ in Equation~\ref{eq:objectiveMod} to the betweenness centrality. There are two important quantities to consider: the number of shortest paths including the channel and total fee reward gained from these paths.
Maximizing the number of shortest paths passing through a channel or node corresponds to the edge or vertex betweenness centrality (BC), respectively, as defined in Section~\ref{sec:prelim}. However, maximizing the BC does not necessarily imply maximal revenue. As fees represent edge weights, the shortest path here is a path whose edges have the minimal sum of weights. Choosing low fees hence increases the probability to be contained in the shortest path but low fees also indicate a low gain from each transaction. 

Rather, the expected reward of a channel $Ch_i$ is equal to the probability of the transaction passing through that channel times the fee.  Note that each channel needs to have a capacity of at least $tx$ for the payment to choose this path,
thus a optimal solution for Equation~\ref{eq:objectiveMod} will only create channels of sufficient capacity and we can exclude the capacity aspect from $Pr[X_i(tx, S, R)]$.
With $\ebc{Ch_i} $ denoting the edge betweenness centrality of a channel $Ch_i$ with fees $\fee{Ch_i}$\footnote{For the rest of section, we drop the transaction amount $tx$ from the channel fee formula $\fee{Ch_i}$ as it is fixed.}, the formal expression for the expected reward of $Ch_i$ is 
\begin{eqnarray}\label{eqn:er}
\mathsf{ER} (Ch_i) =  \fee{Ch_i} \cdot \ebc{Ch_i}.
\end{eqnarray}
%Note that, from now on, we drop the transaction amount $tx$ from the channel fee formula $\fee{Ch_i}$ as it is fixed.
As a consequence, the total expected reward of $\mathbf{A}$ from Equation~\ref{eq:objectiveMod} is 
\begin{eqnarray}\label{eqn:er_node}
	\mathsf{ER}(\mathbf{A}) =  \sum_{ \forall Ch_i \in C } \mathsf{ER}(Ch_i).
\end{eqnarray}
Now, we can formally define the problem from Equation~\ref{eq:objective} as the \textit{maximum reward improvement} (MRI) problem.

\begin{definition}[MRI Problem]
For a payment channel network $\mathcal{LN}$ and a node $n$, the problem of finding a set of channels $C$ such that $\mathsf{ER}(n)$ is maximized is called the \emph{maximum reward improvement} problem.
\end{definition}

The following theorem states that it is not possible to design an algorithm $\f{cs}$ that finds the optimum solution within polynomial time, unless $P=NP$.

\begin{theorem}[MRI Approximation Theorem]\label{thm:our}
$\mathsf{MRI}$ problem cannot be approximated in polynomial time within a factor greater than $1-\frac{1}{2\epsilon}$, unless $P=NP$.
\end{theorem}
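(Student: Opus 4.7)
The plan is to reduce the $\mathsf{MBI}$ problem to $\mathsf{MRI}$ and invoke Theorem~\ref{thm:MBIapprox}. Given an $\mathsf{MBI}$ instance consisting of a (strongly connected) directed graph $G$, a node $n$, and a budget $k$ on the number of new incident channels, I would construct an $\mathsf{MRI}$ instance on the same graph with $\mathbf{A}=n$, a fixed transaction value $tx$, uniform sender/receiver distribution $M$, all existing edge fees set to a common constant $c$, and a collateral budget that accommodates exactly $k$ channels of capacity $tx$. The goal of the construction is to arrange things so that every optimal $\mathsf{MRI}$ solution uses fee $c$ on each of its new channels, in which case Equations~\ref{eqn:er} and~\ref{eqn:er_node} give $\mathsf{ER}(\mathbf{A}) = c \cdot \sum_{(n,v)\in C} \ebc{[nv]}$.

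The second step is the identity
\[
\sum_{v:(n,v)\in E} \ebc{[nv]} \;=\; \bc{n} + (|V|-1),
\]
which converts the sum of outgoing edge centralities into the vertex centrality of $n$ plus a constant. I would prove it by splitting the pairs $(s,t)$ in the definition of $\ebc{\cdot}$: pairs with $s\neq n\neq t$ whose shortest paths visit $n$ contribute exactly $\bc{n}$, because each such path uses exactly one outgoing edge of $n$; pairs with $s=n$ contribute $|V|-1$ in total, again because exactly one outgoing edge is used and, by strong connectivity, there are $|V|-1$ destinations; pairs with $t=n$ contribute zero. Because the additive term does not depend on the chosen channel set $C$, maximizing $\mathsf{ER}(n)$ over $C$ is equivalent to maximizing $\bc{n}$ over $C$. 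A hypothetical polynomial-time $\mathsf{MRI}$-approximation beating $1-\frac{1}{2\epsilon}$ would therefore, when applied to these instances, beat the same factor for $\mathsf{MBI}$, contradicting Theorem~\ref{thm:MBIapprox}. If the constant $|V|-1$ threatens to loosen the ratio, I would pad the construction (for example by blowing up the MBI instance via replication) so that $\bc{n}$ asymptotically dominates it.

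The main obstacle is ensuring that the optimal $\mathsf{MRI}$ solution genuinely assigns fee $c$ to every new channel. The tradeoff is delicate: lowering a channel's fee below $c$ can strictly increase its edge betweenness by making additional cheapest paths route through it, while raising it above $c$ can remove the channel from some cheapest paths, so neither extreme is automatically optimal without additional structure. I see two routes around this. The first is to design the neighborhood of $n$ so that any deviation from $c$ strictly decreases the reward, e.g., by arranging the existing-fee structure so that every alternative path has the same cost as the cheapest path through $n$ and any fee other than $c$ immediately loses or fails to gain edge betweenness. The second, and cleaner, route is to first prove inapproximability for the restricted variant of $\mathsf{MRI}$ in which fees are fixed to $c$ by definition; since this restricted variant is a special case of $\mathsf{MRI}$, the inapproximability lifts automatically and the tradeoff argument is sidestepped entirely. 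I would pursue the restricted-variant route first.
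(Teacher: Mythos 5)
Your proposal takes essentially the same route as the paper: restrict $\mathsf{MRI}$ to the fixed-fee case (the paper's subproblem $\mathsf{MRI\_FF}$ with $f_{max}=1$, which is exactly your ``restricted-variant'' option), identify the sum of the outgoing edges' betweenness centralities with the vertex betweenness centrality of $n$, and invoke Theorem~\ref{thm:MBIapprox} to transfer the inapproximability bound. Your additional care is reasonable but does not change the structure of the argument --- in particular you correctly note the additive $|V|-1$ term contributed by pairs with $s=n$, which the paper's equality $(*)$ silently drops.
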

\begin{proof}
	To prove this theorem, we reduce our $\mathsf{MRI}$ problem to $\mathsf{MBI}$ problem presented in Definition \ref{def:mbi}.
	Using Equation \ref{eqn:er_node}, we can formulate the $\mathsf{MRI}$ problem as follows:
	\begin{eqnarray*}
	\mathsf{MRI}(\mathbf{LN},n,N_c) \rightarrow
	\mathcal{CH}_M = \argmax_{ \substack{ \forall \; |\mathcal{CH}| \leq N_c \\ \sender{Ch_i}=n \\ \forall \; \fee{Ch_i} \in [1,f_{max}] } } \left( \mathsf{ER}(n) = \sum_{ \forall Ch_i \in \mathcal{CH} } \mathsf{ER}(Ch_i) \right).
	\end{eqnarray*}
	
	We introduce a subproblem, namely $\mathsf{MRI\_FF}$, where the upper limit of the fee $f_{max}$ is equal to 1, which means that all the channel fee are equal to 1.
	Using the Equation \ref{eqn:er}, $\mathsf{MRI\_FF}$ can be formulated as:
	
		\begin{eqnarray}\label{eqn:reduction}
		&&\mathsf{MRI\_FF}(\mathbf{LN},n,N_c) \rightarrow
		\mathcal{CH}_M = \argmax_{ \substack{ \forall \; |\mathcal{CH}| \leq N_c \\ \sender{Ch_i}=n } } \left( \sum_{ \forall Ch_i \in \mathcal{CH} }   \ebc{Ch_i} \right) \\ \nonumber
		&&
		\stackrel{(*)}{=}\argmax_{ \substack{ \forall \; |\mathcal{CH}| \leq N_c \\ \sender{Ch_i} || \receiver{Ch_i} =n  } } \left(  bc_{n}  \right) \stackrel{(**)}{=}\mathsf{MBI}(\mathbf{LN},n,N_c),
	\end{eqnarray}
	which reduces to the $\mathsf{MBI}$ problem.
    Here, the first equality $(*)$ holds because the summation of the all shortest paths passing from out-going edges is equal to the total number of shortest paths passing through that node.
    In other words, the summation of edge betweenness centrality of all out-going edges of a node is equal to betweenness centrality of that node.
    The second equality $(*)$ follows from the definition of the $\mathsf{MBI}$ problem given in Definition \ref{def:mbi}.
    
    Now, we can prove our theorem by contradiction.
    Let assume there exists an approximation to $\mathsf{MRI}$ problem within a factor greater than $1-\frac{1}{2\epsilon}$.
    Then, the same approximation would hold for the subproblem of $\mathsf{MRI}$, $\mathsf{MRI\_FF}$ with a certain maximal fee, namely 1. However, in Equation \ref{eqn:reduction}, we showed that $\mathsf{MRI\_FF}$ problem is equivalent to the $\mathsf{MBI}$ problem.
    This contradicts Theorem \ref{thm:MBIapprox}.
    Therefore, $\mathsf{MRI}$ problem cannot be polynomially approximated within a factor greater than $1-\frac{1}{2\epsilon}$, unless $P=NP$.
\end{proof}

\subsection{Channel Selection Function}
We present a \textit{greedy} algorithm $\f{cs}$ to approximate the $\mathsf{MRI}$ problem.
$\f{cs}$ takes the PCN and the requested number of channels as input and outputs the set of nodes to whom channels are created.
It internally calls $\f{cf}$, the algorithm for deciding the fee of a channel.
Formally, we have 
\begin{eqnarray}\label{eqn:cff}
	&&\f{cf}(\mathcal{CH} \cup Ch) \rightarrow R_{Ch}:\nonumber\\ 
	&& R_{Ch}= \mathsf{TotalER}(\mathcal{CH} \cup Ch,f) \text{ where } f =\argmax_{f_i \in [1,f_{max}] } \left( \mathsf{TotalER}(\mathcal{CH} \cup Ch,f_i)  \right),\nonumber\\
	&& \mathsf{TotalER}(\mathcal{CH} \cup Ch,f_i)= \mathsf{ER} (Ch)_{\fee{Ch}=f_i}+\sum_{ \substack{ \forall Ch_j \in \mathcal{CH} } } \mathsf{ER} (Ch_j).  
\end{eqnarray}

As detailed in Algorithm \ref{alg:csf}, our greedy algorithm for $\f{cs}$ consists of the following five key steps:
\begin{enumerate}
	\item Start with an initial PCN of nodes and channels.
	\item At each step, try all possible channels between our node and other nodes. 
	\item Compute the maximum reward of the channel by using $\f{cf}$.
	\item Connect to the node who gives the maximum reward and update the PCN.
	\item Go to step (2) until the desired number of channels is established. 
\end{enumerate}

\begin{algorithm}
	\caption{Channel Selection Function}
	\label{alg:csf}
	\begin{algorithmic}[1]
		\Require $\mathbf{LN}$ and $N_c$
		\Ensure $\mathcal{CH}$
		\Function{$\f{cs}$}{$\mathbf{LN},N_c$} 
		\State $ \mathcal{CH} \leftarrow \emptyset$
		\While{$|\mathcal{CH}|<N_c$ }
		\State $ maxRew \leftarrow 0, selectednode = None$
		\For{Each node $n_i \in \mathbf{LN}$} 
		\State Create a channel between $(n,n_i)$: $\mathbf{LN}_i\leftarrow AddEdges(\mathbf{LN},[n,n_i])$
		\State Calculate the reward $R_{n_i} \leftarrow \f{cf}(\mathbf{LN}_i,\mathcal{CH} \cup [n,n_i])$
		\If{$maxRew \leq  R_{n_i}$}
		\State $maxRew = R_{n_i}$
		\State $selectednode=n_i$
		\EndIf
		\EndFor
		\State $\mathcal{CH} \leftarrow \mathcal{CH} \bigcup \{selectednode\}$
		\State $\mathbf{LN} \leftarrow AddEdges(\mathbf{LN},[n,selectednode])$
		\EndWhile
		\State \Return $\mathcal{CH}$
		\EndFunction
	\end{algorithmic}
\end{algorithm}

Next, we ascertain that channel additions cannot reduce the expected revenue, indicating that nodes should add all channels they can fund. 
Here, it is important to note that we do not take into account the channel opening cost $ChCost$.
Thus, if the marginal reward improvement of a new channel is zero, there is no point in add the channel.

\begin{theorem}[Monotonicity]\label{thm:ourgreedy}
The objective function of Algorithm \ref{alg:csf} is a monotone non-decreasing function.
\end{theorem}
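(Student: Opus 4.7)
The plan is to track the algorithm's achieved reward $R_k = \mathsf{TotalER}(\mathcal{CH}_k, f_k)$ after $k$ iterations, where $\mathcal{CH}_k$ is the set of channels selected so far and $f_k$ denotes the associated fees chosen in previous calls to $\f{cf}$. The goal is to show $R_{k+1} \geq R_k$, and then induction over $k$ (starting from $R_0 = 0$) yields the monotonicity. The first step is to observe that at iteration $k+1$ Algorithm~\ref{alg:csf} appends one channel $Ch'$ and $\f{cf}$ selects only $\fee{Ch'}$ via an argmax over $[1, f_{max}]$, while the fees of channels in $\mathcal{CH}_k$ are left untouched. Therefore it suffices to exhibit a single admissible fee $f^\star$ for $Ch'$ under which the total reward already matches $R_k$; the $\argmax$ in $\f{cf}$ can only do at least as well.

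The second step is to construct such a neutralizing fee explicitly by pushing $\fee{Ch'}$ up to a value for which $Ch'$ never participates in any cheapest path between two nodes of $V\setminus\{\mathbf{A}\}$. Because fees correspond to edge weights and shortest paths are computed with respect to this weight, a sufficiently large $\fee{Ch'}$ makes any candidate path through $Ch'$ strictly more expensive than the best alternative that avoids $Ch'$. Under this fee assignment the family of shortest paths in the weighted graph coincides with the family before $Ch'$ was added, so $\ebc{Ch_i}$ is preserved for every $Ch_i \in \mathcal{CH}_k$ and $\ebc{Ch'}=0$. Summing the per-channel contributions via Equation~\ref{eqn:er_node} yields exactly $R_k$, which together with optimality of $\f{cf}$ gives $R_{k+1}\geq R_k$.

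A complementary argument, which I would record as a consistency check, leverages Theorem~\ref{thm:bc}: adding an edge incident to $\mathbf{A}$ cannot decrease $\bc{\mathbf{A}}$, and by equality $(*)$ in the proof of Theorem~\ref{thm:our} we have $\bc{\mathbf{A}}=\sum_{Ch_i\in C}\ebc{Ch_i}$, so the raw shortest-path mass available to $\mathbf{A}$ is monotone in $C$; the fee-weighted objective then inherits the monotonicity through the optimality of $\f{cf}$.

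The main obstacle is rigorously justifying that a neutralizing fee $f^\star$ lies inside the admissible interval $[1,f_{max}]$. The paper motivates $f_{max}$ earlier as an upper bound beyond which a channel ceases to be used, explicitly remarking that any fee entailing exclusion from all shortest paths can be disregarded; I would lean on this interpretation to argue that $f^\star \leq f_{max}$ is without loss of generality. Once this is accepted, the rest of the proof is a clean application of the fact that a maximum over a feasible set is at least the value attained at any particular feasible point, combined with the invariance of pre-existing shortest paths when $Ch'$ is effectively inactive.
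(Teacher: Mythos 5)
Your overall framing --- exhibit one feasible fee for the new channel at which the total reward already matches the previous value, then invoke the $\argmax$ inside $\f{cf}$ --- is the same high-level skeleton as the paper's proof. The difference lies in which feasible fee you exhibit, and that is where your argument has a genuine gap. You choose a ``neutralizing'' fee $f^\star$ large enough that the new channel lies on no cheapest path, so that every previous $\ebc{Ch_j}$ is preserved and the new channel contributes nothing. But $f^\star$ must lie in $[1,f_{max}]$ for the $\argmax$ to see it, and nothing in the setup guarantees this: $f_{max}$ is a fixed global constant (set to $ChCost$ in the evaluation), whereas the exclusion threshold you need depends on the particular channel and the current graph and can exceed any bound fixed in advance. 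The paper's earlier remark that fees excluding a channel from every cheapest path ``can be disregarded'' justifies capping the search space; it does not define $f_{max}$ as an exclusion threshold, so you cannot lean on it without changing the statement being proved. (A secondary issue: for source--destination pairs that become connected only through the new channel, no finite fee excludes it, so $\ebc{Ch'}=0$ is not literally achievable; this happens to work in your favor, since such pairs contribute nonnegatively and leave the other channels' EBCs untouched.)

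The paper closes the gap differently: it evaluates the candidate fee $f=f_{max}$ itself, at which the new channel may well remain on cheapest paths and divert probability mass away from the existing channels, and then shows that the new channel's earnings $\ebc{[n,n_i],\mathbf{LN}_i}\cdot f_{max}$ compensate for the loss $\sum_j\left(\ebc{Ch_j,\mathbf{LN}_0}-\ebc{Ch_j,\mathbf{LN}_i}\right)\cdot\fee{Ch_j}$. This works because each existing fee satisfies $\fee{Ch_j}\le f_{max}$, each difference is nonnegative, and the node's total betweenness $\bc{n}=\sum_i\ebc{Ch_i}$ is non-decreasing under edge additions by Theorem \ref{thm:bc}. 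Your ``complementary'' paragraph gestures at exactly this, but the step ``the fee-weighted objective then inherits the monotonicity'' is the entire content of the theorem: monotonicity of the unweighted EBC sum does not by itself prevent the fee-weighted sum from dropping when shortest-path mass migrates from high-fee channels to a cheaper new one. To repair your proof, replace the neutralizing fee by $f_{max}$ and carry out this compensation estimate.
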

\begin{proof}
%First, we need to show that our objective function $\f{cf}$ is monotone. 
A function $\mathcal{F}: \Omega \rightarrow \mathbb{R} $ is a monotone function if it satisfies the following condition:
	\begin{eqnarray}
	\forall S \subseteq T \subseteq \Omega, \quad  \mathcal{F}(S) \leq  \mathcal{F}(T). 
	\end{eqnarray} 	
In our case, for any solution $\mathcal{CH}$ to $\mathsf{MRI}$ and for any node $n_i$ such that $[n,n_i] \notin \mathcal{CH}$, the following inequality holds $\f{cf}(\mathcal{CH} \cup [n,n_i]) \geq \f{cf}(\mathcal{CH})$.
Note that $\f{cf}$ checks for all possible fee values to maximize the total reward. 
In that sense, it would be enough to show that for the maximum fee value $f_{max}$, which can be formulated by using Equation \ref{eqn:cff} (with $\mathbf{LN}_0= \mathbf{LN}\cup \mathcal{CH}$ and $\mathbf{LN}_i= \mathbf{LN}\cup \mathcal{CH} \cup [n,n_i]$):

	\begin{eqnarray*}
		&&\f{cf}(\mathbf{LN},\mathcal{CH} \cup [n,n_i])  \geq \mathsf{TotalER}(\mathbf{LN},\mathcal{CH} \cup [n,n_i],f=f_{max})   \stackrel{?}{\geq} \f{cf}(\mathbf{LN},\mathcal{CH} ) \iff \\
		&& \mathsf{ER} (Ch,\mathbf{LN}_i )_{\fee{Ch}=f_{max}}+\sum_{ \substack{ \forall Ch_j \in \mathcal{CH} } } \mathsf{ER} (Ch_j,\mathbf{LN}_i ) 
		\stackrel{?}{\geq} \sum_{ \substack{ \forall Ch_j \in \mathcal{CH} } } \mathsf{ER} (Ch_j , \mathbf{LN}_0 ) \iff  \\
		&& \ebc{[n,n_i], \mathbf{LN}_i } \cdot f_{max} +  \sum_{ \substack{ \forall Ch_j \in \mathcal{CH} } } \ebc{Ch_j, \mathbf{LN}_i}   \cdot \fee{Ch_j} \stackrel{?}{\geq} 
		\sum_{ \substack{ \forall Ch_j \in \mathcal{CH} } }  \ebc{Ch_j,\mathbf{LN}_0 }  \cdot \fee{Ch_j} \\
		&& \iff \ebc{[n,n_i], \mathbf{LN}_i } \cdot f_{max} \stackrel{?}{\geq}   \sum_{ \substack{ \forall Ch_j \in \mathcal{CH} } } \left( \ebc{Ch_j, \mathbf{LN}_0}  -   \ebc{Ch_j, \mathbf{LN}_i}  \right) \cdot \fee{Ch_j} \\
		&& \stackrel{(*)}{\impliedby}\ebc{[n,n_i], \mathbf{LN}_i } \stackrel{?}{\geq}   \sum_{ \substack{ \forall Ch_j \in \mathcal{CH} } } \left(  \ebc{Ch_j, \mathbf{LN}_0} - \ebc{Ch_j, \mathbf{LN}_i} \right) \\
		&& \iff  \ebc{[n,n_i], \mathbf{LN}_i } + \sum_{ \substack{ \forall Ch_j \in \mathcal{CH} } }   \ebc{Ch_j, \mathbf{LN}_i}  \stackrel{?}{\geq}  \sum_{ \substack{ \forall Ch_j \in \mathcal{CH} } }   \ebc{Ch_j, \mathbf{LN}_0}\\
		&& \stackrel{(**)}{\iff}  \bc{n, \mathbf{LN}_i}    \stackrel{?}{\geq}    \bc{n, \mathbf{LN}_0}  .
	\end{eqnarray*}
%}
 
%Here, $(*)$ comes from the fact that the total edge betweenness centrality of a graph is fixed (for a fixed number of nodes).
Here, $(*)$ condition is true since for all channels $\fee{Ch_i}\leq f_{max}$ by the definition. 
$(**)$ is satisfied since the summation of edge betweenness centrality of all out-going edges of a node is equal to betweenness centrality of that node.
At the end, $ \bc{n, \mathbf{LN}_i}   \geq  \bc{n, \mathbf{LN}_0} $ holds because betweenness centrality is a monotone function, see Theorem \ref{thm:bc}. 

%\begin{eqnarray*}
%&&	\mathsf{ER} (Ch ,\mathbf{LN}\cup [n,n_i] )_{\fee{Ch}=f_{max}}+\sum_{ \substack{ \forall Ch_j \in \mathcal{CH} } } \mathsf{ER} (Ch_j,\mathbf{LN}\cup [n,n_i] ) \stackrel{?}{\geq} \sum_{ \substack{ \forall Ch_j \in \mathcal{CH} } } \mathsf{ER} (Ch_j , \mathbf{LN} ) \\ 
%&&	\implies \f{cf}(\mathbf{LN},\mathcal{CH} \cup [n,n_i])  \stackrel{?}{\geq} \f{cf}(\mathbf{LN},\mathcal{CH} ) 
%\end{eqnarray*}
%\todo{Show that it is submodular!!!!!! WORKING ON IT}
\end{proof}

%For that reason, in Algorithm \ref{alg:csf} of our $\f{cs}$, we also utilize the greedy algorithm. 

%Here, $\f{cf}$ is called to decide on the fee for each channel.

\subsection{Efficient Search Algorithm for the Channel Fee Function}
No closed-form formula finds the best fee amount maximizing the expected reward due to the term $\ebc{Ch}$ for a channel $Ch$.
Here, we analyze Equation \ref{eqn:er} to minimize the computational cost by discarding some parts of the search space.
First of all, since $\ebc{\mathbf{LN}}$ is not affected by the changes in the fees of channels, the denominator is irrelevant for optimizing the $\mathsf{ER}(Ch)$.
Therefore, $\f{cf}$ can be seen as function of the EBC of the channel $\ebc{Ch}$ and the fee $\fee{Ch}$.
Secondly, $\ebc{Ch}$ is negatively affected by $\fee{Ch}$ because increasing the fee means an increase in the weight of the edge that results in a lower chance of being in the shortest paths (see Figure \ref{fig:ebc} in Section \ref{sec:app_figure} for an illustrative example). 

Two observations give rise to an efficient search algorithm for finding the most suitable fee. 
The first observation utilizes the fact that edge betweenness centrality is a monotone decreasing function concerning the channel fee.  
Let the expected reward of a channel for chosen fees $f_3>f_1$ be $r_1=e_1\cdot f_1$ and $r_3=e_3\cdot f_3$, respectively. 
If $r_3> r_1$, let
%$f_2$ can be found by the following formula:
\begin{eqnarray}
f_2= f_1 \cdot \frac{r_3}{r_1} = f_3 \cdot \frac{e_3}{e_1}.
\end{eqnarray}
It can be seen that the expected reward for any fee $f_\alpha$, namely $r_\alpha$, where $f_1<f_\alpha\leq f_2$ is at most $r_3$:
\begin{eqnarray}
r_\alpha=e_\alpha \cdot f_\alpha \leq e_1 \cdot f_\alpha \leq e_1 \cdot f_2  = e_3 \cdot f_3 = r_3.
\end{eqnarray}
In other words, there is no need to compute the expected reward values for the fees in between $f_1$ and $f_2$ as they cannot be optimal values.

The second observation is that increasing the fee of an out-going channel $Ch$ cannot decrease the edge betweenness of another out-going channel $Ch'$ of the same node.
Such an increase can only reduce the edge betweenness of channels that are on a path containing $Ch$ by removing the path from the set of shortest paths. 
However, as shortest paths cannot have loops, two out-going channels of the same node cannot be on the same shortest path.
Now, let $\mathcal{CH}$ be the set of previously selected channels. 
Let $r'_1$ and $r'_3$ be the sum of the expected fees of all channels $Ch' \in \mathcal{CH}$ for fees $f_1$ and $f_3$ with $f_3 > f_1$.  By the above observation, we have $r'_3 \geq r'_1$.

Our recursive algorithm divides the space of all possible fee values from $1$ to $f_{max}$ into $\mathsf{d}$ intervals.
For each interval $i$, let $r_i=\mathsf{ER} (Ch,\fee{Ch}=f_i)$ be the expected reward  of $Ch$, $r_i' \leftarrow \sum_{ \substack{ \forall Ch' \in \mathcal{CH} } } \mathsf{ER} (Ch',\fee{Ch'})$ be the total reward of the other channels. 
By the first observation, the maximal increase in $r_i$ is $\frac{ f_{i+1} }{f_i}$ and by the second observation $r'_{i+1} \geq r'_i$ as $f_{i+1}> f_i$.
Thus, the maximum possible reward value for interval $i$ is $\widetilde{R}_i=r_i \cdot \frac{ f_{i+1} }{f_i}+r'_{i+1}$.
If $\widetilde{R}_i$ is greater than the current maximum reward value, the algorithm recursively searches for a maximum in the interval, otherwise discards the interval.
We present the pseudocode for the algorithm in Section~\ref{sec:pseudoCFF}.

This completes the description of our algorithm, which we evaluate in the following in comparison to other approaches based on common centrality metrics.

\section{Evaluation}\label{sec:implementation}

In this section, we evaluate our proposed fee strategy for a real-world topology. 
Our evaluation quantifies the total reward gained by $\mathbf{A}$ when using our greedy algorithm.

To emphasize the high effectiveness of our solution, we compared it with other channel and fee selection algorithm.
For the channel selection, we considered random nodes as well as connecting to nodes with a high centrality for three centrality metrics: i) degree, i.e., connecting to the nodes with the most connections, ii) betweenness centrality, and iii) pagerank~\cite{page1999pagerank}. 
For the fee strategy, we compute the results for both cases where the channel fees are the default values and they are determined by $\f{cf}$.

\subsection{Model}
In Lightning network, the upcoming transactions and current balances of channels are not known. Thus, we need to model the network and transactions.

\paragraph{Transactions.}
Like Section~\ref{sec:strategy}, our evaluation assumes that all source-destination pairs are equally likely. Furthermore, we categorize the transactions into three groups based on the amounts:
\begin{itemize}
	\item \textit{Micro payments} are the transactions involving a very small amount of coins. To represent this category, we use the transaction amount of 100 Satoshi, which is about one cent\footnote{https://awebanalysis.com/en/convert-satoshi-to-euro-eur/}. 
	An example of a use case would be the streaming services where you pay small amounts per service.
	\item \textit{Medium payments}: are the transactions spent for daily living expenses like buying a coffee, which is represented with 10000 Satoshi. 
	\item \textit{Macro payments}: are transactions of high amounts, which is represented with 1000000 Satoshi. The amount of these transactions are in the order of 100 Euros.
\end{itemize}

From these categories, it is most likely that micro payments are usually restricted to nodes that have a direct channel. Otherwise, the base fee for the payment greatly exceeds the actual payment value. 
Therefore, our target transactions are medium and macro payments, which are analyzed separately.
 
\paragraph{Network.}
Following our system model in Section~\ref{sec:model}, networks are represented as weighted directed graphs. 
The weights of the edges in the graph model are calculated according to the fee rate and base fees of the channels.
Since the fee rate depends on the transaction amount, the weights of the same edges for medium and macro payments will be different.
The graph generated for the medium (macro) payments is called medium (macro) graph.

\subsection{Setup}

We obtained a snapshot of the Lightning Network (LN) data from $1ml.com$ on July 10 2019, which contains $4618$ nodes and $68729$ edges in total.
When we delete the edges with insufficient capacity, the medium graph has $68697$ edges and the macro graph has $32193$ edges. 

As a node requires at least two connections to be contained in any shortest paths, we first connected $\mathbf{A}$ to the two nodes with the highest degree (, which happen to have the highest pagerank as well).
For these two connections, we use the default fee rate and base fee values in both directions of the edges.
%Thereby, the weight of the edge is computed with respect to the transaction amount.\stefc{unsure what the previous sentence means}
Based on this initial scenario, we now connect $\mathbf{A}$ to additional nodes. 

The experiments use $ChCost=8192$ Satoshi, which reflects the fluctuating Bitcoin transaction fee estimates\footnote{https://bitcoinfees.info/}.
When establishing a new channel, our simulation added edges in both directions. The base fee and the fee rate of the in-coming edge correspond to the default value to model that i) most users currently stick to the default values and ii) $\mathbf{A}$ has no control over the in-coming channel fees as they are determined by the other party. 
For the outgoing edges, we utilize either $\f{cf}$ to determine the best fee value or use default values. 
When using $\f{cf}$, we restrict the $f_{max}=ChCost$.
Otherwise, the total fee cost of the transaction in the payment network is higher than the cost in the Bitcoin network and the sender is hence unlikely to proceed with the payment. 

\subsection{Experimental Results}
%Figure \ref{fig:reward1} show the performance of our greedy algorithm on the improvement of total fee reward value.

Figures~\ref{fig:reward1} and~\ref{fig:reward2} show the performance of our greedy algorithm in comparison to the other approaches in terms of the total reward improvement per new channel connections. 
The x-axis shows the number of connections added and the y-axis represents the total reward of node $\mathbf{A}$.
Since, for each case, we start with the same two connections, the total reward values have the same offset. 
%To see a clearer impact of our fee strategy, we present all results but our greedy ones.

Figure~\ref{fig:reward1} displays the result for the medium graph. When using default values, the reward is consistently lower than for our fee selection algorithm. More precisely, for centrality-based selection of channels, fee optimization increases the reward by a factor of roughly 2. Selecting channels strategically doubles the gain further in comparison to using  Pagerank centrality, which is the most beneficial one of the centrality-based selection methods.  
Figure \ref{fig:reward2} shows the results of macro graph. The results are similar to the case of medium payments, though the overall gain is slightly higher. 

In terms of fee computation efficiency, our experimental results show that the recursive algorithm is given in Algorithm \ref{alg:csf} reduces the search space of fees in the magnitude of 10--100.

\begin{figure}[hptb!]
	\centering
	\includegraphics[width=0.75\textwidth]{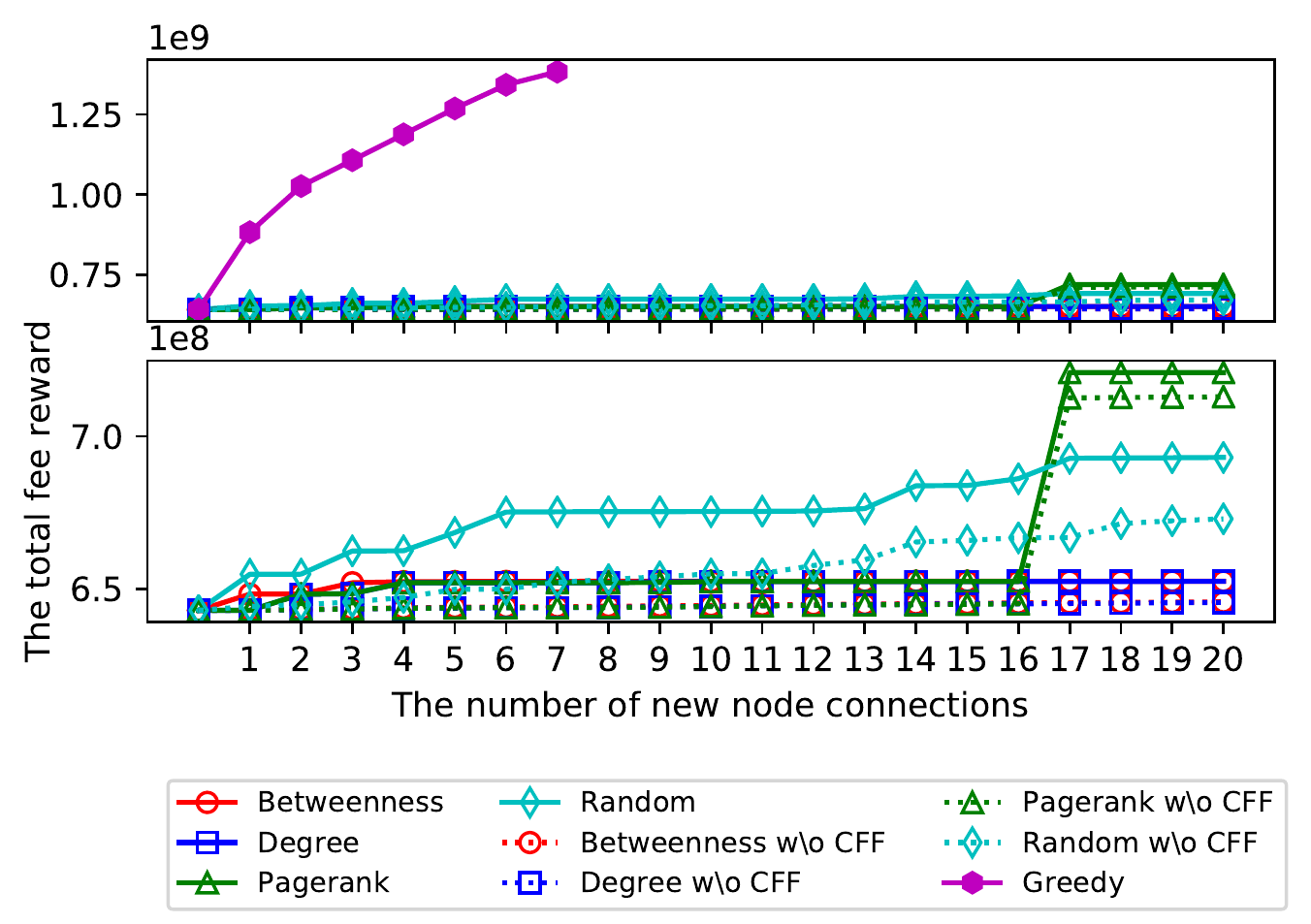}
	\caption{Total fee reward of our node in medium graph. The bottom figure excludes the greedy results to present a clear comparison of the rest.}
	\label{fig:reward1}
\end{figure}

\begin{figure}[hptb!]
	\centering
	\includegraphics[width=0.75\textwidth]{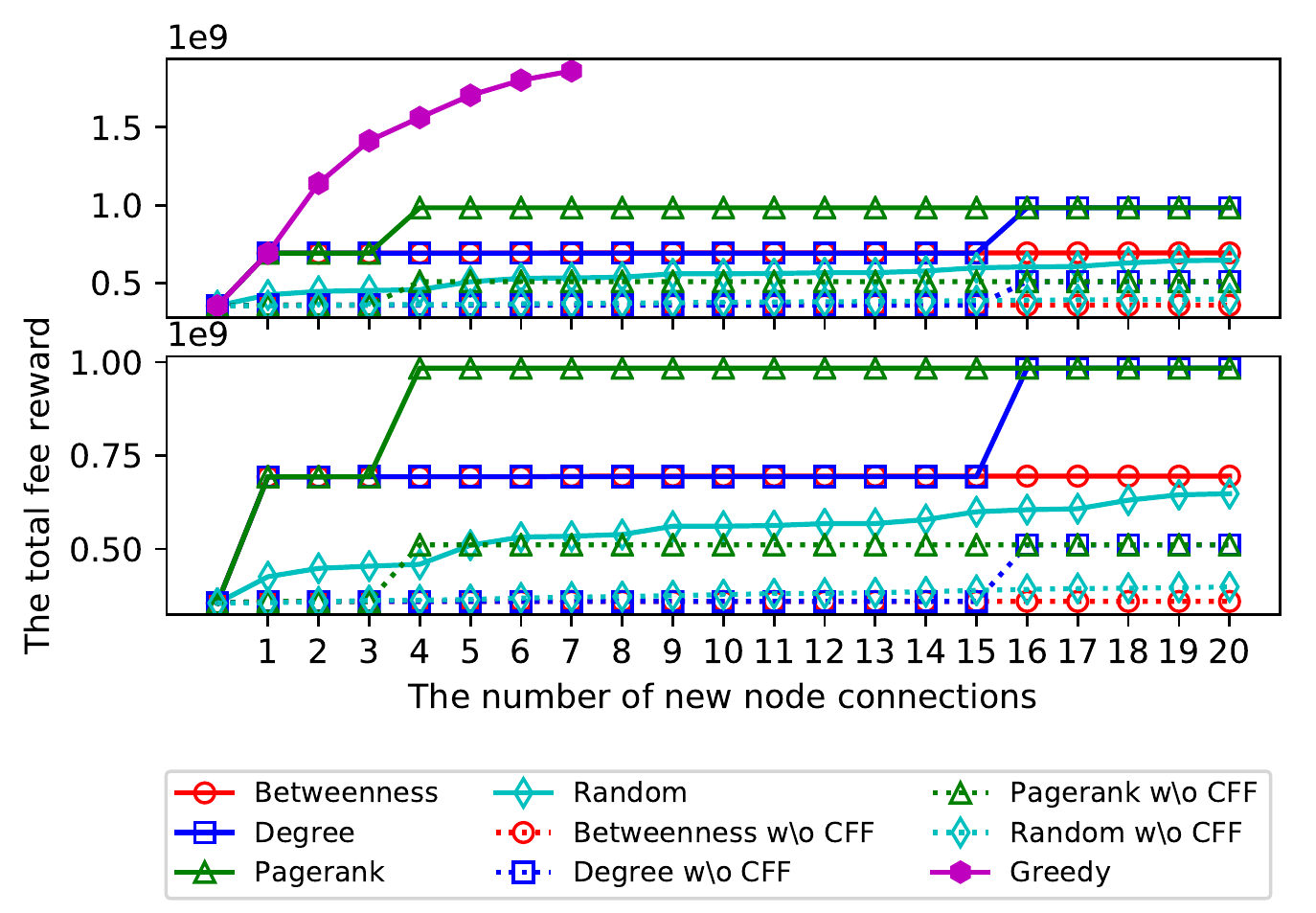}
	\caption{Total fee reward of our node in macro graph. The bottom figure excludes the greedy results to present a clear comparison of the rest.}
	\label{fig:reward2}
\end{figure}

%\begin{figure}[hptb!]
%	%\centering
%	\includegraphics[scale=0.8]{figures/MidTx_10000_greedy_result.pdf}
%	\caption{Betweenness centrality value of our node for medium cost transactions ($|Tx|=10000$ satoshi).}
%	\label{fig:reward2}
%\end{figure}

\subsection{Discussion}
From the experimental results, it can be seen that our greedy algorithm outperforms the centrality metrics.
Furthermore, the beneficial effect of the fee selection function can be observed by comparing the results with and without it.

Note that adding new connections to the nodes with the highest centrality metrics does not increase the total reward in comparison to random selection much, in particular for betweenness centrality. The reason here is that connecting to nodes with many shortest path passing them does not imply that the newly added channel offers a shorter path. Instead, directly focusing on the betweenness centrality of $\mathbf{A}$ results in larger improvements. 

Figure~\ref{fig:reward1} and ~\ref{fig:reward2} furthermore show few but notable differences between medium and macro payments. First, the overall gain is higher for macro payments as expected due to the higher transaction value and hence increased revenue for a similar fee rate. However, the base rate, which is 1000 Satoshi by default\footnote{The default fee values may change regarding the imported implementation. Our analysis on dataset shows that 33177 out of 68733 edges use the defaults we adopted.} in comparison to a default rate of 0.001, dominates the fee value, so that the 100-fold increase in the transaction value does not translate to a similar increase in gain. 
Secondly, the differences between various centrality measures are more distinct for macro payments, see Figure~\ref{fig:reward2}.

Overall, our greedy algorithm promises higher fees for individual nodes. Even if nodes cannot or do not desire to select their channels, they can still gain an advantage by using our more sophisticated fee selection algorithm for already established channels. 

One key limitation of our design is that it does not consider channel capacities as such. When all transactions have the same known value, $\mathbf{A}$ will only establish channels with sufficient collateral. However, in practice, $\mathbf{A}$ does not have such information and routing may fail due to a lack of capacity. Thus, integrating capacity information into both our model and our evaluation is clearly necessary in the future.

\section{Conclusion}

In this paper, we formalized an optimization problem for maximizing fees in payment channel networks, presented a heuristic algorithm for solving the problem, and evaluated our algorithm on real-world data sets. 
Our work demonstrates that routing fees can be a strong incentive for locking coins in payment channels. Fees as incentive hence have the potential to motivate rational users to fund payment channel and hence increase the ability of these networks to route payments. 

In this work, we focused on one individual node aiming to optimize its profit. Future work should design a game-theoretical framework for networks containing only rational nodes aiming to maximize their profit. For the continued usage of payment channel networks, incentives should ensure that strategies for optimizing profit locally also optimize the overall network health in terms of the availability of cost-effective paths. It remains an open question if the current fee model is a suitable incentive to further collaboration and network health.

\bibliographystyle{splncs03}
\bibliography{references}

\appendix
\section{Notation}\label{sec:notation}
The list of abbreviations and notations are given in Table \ref{tab:not}. 

\begin{table}[htp]\caption{Notation and Abbreviation Table}\label{tab:not}
	\begin{center}
		\begin{tabular}{ll}
			\hline\noalign{\smallskip}
			Symbol&Explanation\\
			\noalign{\smallskip}\hline\noalign{\smallskip}
			%\rule[-1ex]{0pt}{3.5ex}            $\f{nf}$& The network fee function.\\ 
			\rule[-1ex]{0pt}{3.5ex}            $\f{cs}$& The channel selection function.\\
			\rule[-1ex]{0pt}{3.5ex}            $\f{cf}$& The channel fee function.\\ 
			\rule[-1ex]{0pt}{3.5ex}            $\f{cd}$& The capacity distribution function.\\ 
		%	\rule[-1ex]{0pt}{3.5ex}            $\f{tf}$& The transaction fee function.\\ 
		%	\rule[-1ex]{0pt}{3.5ex}            $\f{ts}$& The transaction selection function.\\ 
			%	\rule[-1ex]{0pt}{3.5ex}  & \\
		%	\rule[-1ex]{0pt}{3.5ex}            $\mathbb{TX}$& The transaction set.\\ 
			\rule[-1ex]{0pt}{3.5ex}            $\mathbf{LN}$& The payment network.\\ 
			%	\rule[-1ex]{0pt}{3.5ex}  & \\
			\rule[-1ex]{0pt}{3.5ex}            $\coins{X}$& The total amount of coins of $X$.\\
			%\rule[-1ex]{0pt}{3.5ex}            $Ch_i$& The channel $i$.\\
			\rule[-1ex]{0pt}{3.5ex}            $\fee{Ch_i}$& The charging fee of the channel $Ch_i$.\\
			\rule[-1ex]{0pt}{3.5ex}            $\sender{Ch_i}$, $\receiver{Ch_i}$& The source and destination nodes of the channel $Ch_i$.\\
			%	\rule[-1ex]{0pt}{3.5ex}            $c_i$& The capacity of channel $i$.\\
			\rule[-1ex]{0pt}{3.5ex}            $\sender{tx}$, $\receiver{tx}$& The sender and receiver of the transaction $tx$.\\
			%	\rule[-1ex]{0pt}{3.5ex}            $t_{o,i}$, $t_{c,i}$& The channel $i$ opening and closing time.\\
			\rule[-1ex]{0pt}{3.5ex}            $ChCost$& The channel opening and closing on-chain cost.\\
			%			\rule[-1ex]{0pt}{3.5ex}            $ChCl$& The channel closing cost at time $t$.\\
			\hline\noalign{\smallskip}
		\end{tabular}    
	\end{center}    
\end{table}

\section{Illustrative example of the EBC vs. fee relationship of a channel}\label{sec:app_figure}

\begin{figure}[h]
	\centering
	\includegraphics[scale=0.5]{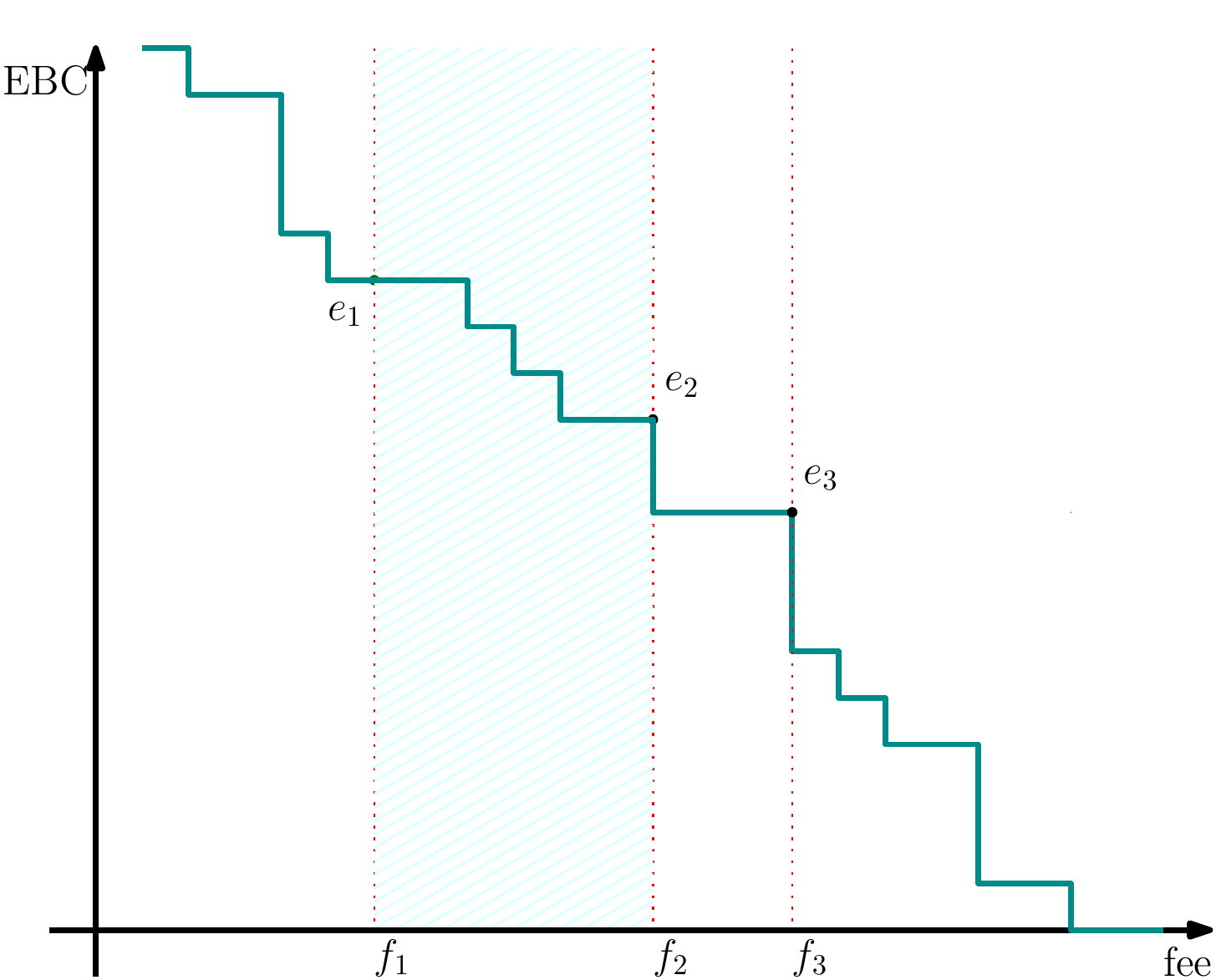}
	\caption{Illustrative example of the EBC vs. fee relationship of a channel.}
	\label{fig:ebc}
\end{figure}

\section{Pseudocode Channel Fee Function}
\label{sec:pseudoCFF}
Algorithm~\ref{alg:cff} is a recursive algorithm for determining the best fee in one step of the greedy algorithm. 

\begin{algorithm}
	\caption{Channel Fee Function}
	\label{alg:cff}
	\begin{algorithmic}[1]
		\Require $\mathbf{LN}$, $\mathcal{CH}$ and $Ch$
		\Ensure $R_{max}$ and $f_{max}$
		\Function{$\f{cf}$}{$\mathbf{LN},\mathcal{CH} \cup Ch,f_l,f_h$} 
		\State \% Initialization: $f_l \leftarrow 1, f_h \leftarrow ChCost,R_{max} \leftarrow 0,f_{max} \leftarrow 1$
		\State \% $\mathsf{d}$ is the division parameter
		\If{$f_h - f_l \leq \mathsf{d}$} \% Anchor step:
		\For{$f \in \{ f_l,\ldots,f_h \}$}
		\State $[r,r'] \leftarrow \mathsf{TotalER}(\mathcal{CH} \cup Ch,f)$
		\State Calculate the reward $R \leftarrow  r+r' $
		\If{$R\geq R_{max}$}
		\State $R_{max}\leftarrow R$ 
		\State $f_{max} \leftarrow f$
		\EndIf
		\EndFor
		\State \Return 
		\Else  ~\% Recursion step:
		\For{$i \in \{1,\ldots,\mathsf{d} \}$}
		\State $f_i \leftarrow i \cdot \frac{f_h-f_l}{\mathsf{d}} + f_l $
		\EndFor
		\For{$i \in \{1,\ldots,\mathsf{d} \}$}
		\State $ [r_i , r'_i] \leftarrow  \mathsf{TotalER}(\mathcal{CH} \cup Ch,f_i)$
		\State Calculate the reward $R_i = r_i + r'_i $
		\If{$R_i\geq R_{max}$}
		\State $R_{max} \leftarrow R_i$
		\State $f_{max} \leftarrow f_i$
		\EndIf
		\EndFor
		\For{$i \in \{1,\ldots,\mathsf{d} \}$}
		\State Calculate the possible maximum reward $\widetilde{R}_i= r_i \cdot \frac{ f_{i+1} }{f_i}+r'_{i+1}$
		\If{$\widetilde{R}_i>R_{max}$}
		\State $f_l \leftarrow f_i , f_h \leftarrow f_{i+1}$
		\State \Return $\f{cf}(\mathbf{LN},\mathcal{CH} \cup Ch,f_l,f_h)$
		\Else
		\State \% Do nothing - Discard this interval 
		\EndIf
		\EndFor
		\EndIf
		\EndFunction
		\State
		\Function{$\mathsf{TotalER}$}{$\mathcal{CH} \cup Ch,f$} 
		\State $r \leftarrow \mathsf{ER} (Ch,\fee{Ch}=f)$
		\State $r' \leftarrow \sum_{ \substack{ \forall Ch_j \in \mathcal{CH} } } \mathsf{ER} (Ch_j,\fee{Ch_j})$
		\State \Return $[r ,r']$
		\EndFunction
	\end{algorithmic}
\end{algorithm}

\end{document}